\documentclass[acmsmall,screen,nonacm]{acmart}
\usepackage{booktabs}
\usepackage{tabularx}
\usepackage{amsmath}
\usepackage{algorithm}
\usepackage{algpseudocode}
\newtheorem{proposition}{Proposition}

\newcommand{\AP}{\mathrm{AP}}
\newcommand{\MAP}{\mathrm{MAP}}
\acmDOI{}
\acmISBN{}
\setcopyright{none}
\hypersetup{keeppdfinfo=true}
\date{}
\makeatletter
\def\@ACM@checkaffil{%
  \if@ACM@instpresent\else
    \ClassWarningNoLine{acmart}{No institution present for an affiliation}%
  \fi}
\makeatother
\begin{document}
\title[World Requirement Model]{World Requirement Model: Learning Requirement-Change Consequences from Typed Artifact Graphs}
\author{Yuanpeng He}
\affiliation{\institution{Peking University}}
\email{heyuanpeng@stu.pku.edu.cn}

\author{Lijian Li}
\affiliation{\institution{Macau University}}
\email{yc57484@um.edu.mo}

\author{Dongming Jin}
\affiliation{\institution{Peking University}}
\email{dmjin@stu.pku.edu.cn}

\author{Huanyao Zhang}
\affiliation{\institution{Peking University}}
\email{2301111978@stu.pku.edu.cn}

\author{Fangjing Li}
\affiliation{\institution{Beijing Jiaotong University}}
\email{lifangjing@bjtu.edu.cn}

\author{Linyu Li}
\affiliation{\institution{Peking University}}
\email{linyuli@stu.pku.edu.cn}

\author{Chung-ju Huang}
\affiliation{\institution{Peking University}}
\email{chongruhuang.pku@gmail.com}

\author{Tianxiang Zhan}
\affiliation{\institution{Peking University}}
\email{zhantianxianguestc@hotmail.com}

\author{Qingsong Wen}
\affiliation{\institution{Squirrel Ai Learning}}
\email{qingsongedu@gmail.com}

\author{Wenpin Jiao}
\affiliation{\institution{Peking University}}
\email{jwp@pku.edu.cn}

\renewcommand{\shortauthors}{Yuanpeng He et al.}
\begin{abstract}
Requirement changes can affect connected stakeholders, constraints, components, and tests. We present World Requirement Model (WRM), which encodes this engineering context as a typed artifact graph and predicts consequences at shared artifact identifiers. Relation-aware attention and typed propagation contextualize nodes; world and decision representations support learned dynamics. Shared readouts score impact, conflict, violation, and defect risk; auxiliary objectives supervise successor adjacency and latent prediction. On 28 scored cases from 282 synthetic cases in six domains, WRM obtains impact mean average precision (MAP) of 0.724 versus 0.623 for a hashed-text multilayer perceptron (MLP), a 16.2\% relative gain and paired difference of 0.101 (conditional 95\% interval [0.044,0.159]). Lowest-quarter mean AP improves by 32.4\%, and equal-domain MAP by 13.3\%. Four 47-case comparisons on an expanded corpus show MAP gains of 15.6--29.1\% and higher means on all five reported metrics. The recorded advantage thus extends across score summaries and annotation/training settings. Checkpoints were selected on scored cases, and backbones are unmatched, so these results characterize selected systems. Our analysis establishes candidate-coverage bounds and shows that the current linear impact head cannot rerank a fixed world's artifacts across decisions. WRM contributes an artifact-addressed world-model formulation, comparative evidence for contextual consequence scoring, and explicit conditions for evaluating requirement-world prediction.
\end{abstract}
\begin{CCSXML}
	<ccs2012>
	<concept>
	<concept_id>10011007.10011074.10011075.10011076</concept_id>
	<concept_desc>Software and its engineering~Requirements analysis</concept_desc>
	<concept_significance>500</concept_significance>
	</concept>
	<concept>
	<concept_id>10010147.10010257.10010321.10010336</concept_id>
	<concept_desc>Computing methodologies~Feature selection</concept_desc>
	<concept_significance>500</concept_significance>
	</concept>
	</ccs2012>
\end{CCSXML}

\ccsdesc[500]{Software and its engineering~Requirements analysis}
\ccsdesc[500]{Computing methodologies~Feature selection}
\keywords{requirements engineering, change impact analysis, typed graphs, world models, decision-conditioned prediction, evaluation validity}
\maketitle
\hypersetup{
  pdfauthor={Yuanpeng He, Lijian Li, Dongming Jin, Huanyao Zhang, Fangjing Li, Linyu Li, Chung-ju Huang, Tianxiang Zhan, Qingsong Wen, Wenpin Jiao},
  pdfkeywords={requirements engineering, change impact analysis, typed graphs, world models, decision-conditioned prediction, evaluation validity}
}

\section{Introduction}\label{sec:intro}
A short requirement edit can create review obligations across an artifact network. Adding live-video monitoring to an examination system raises questions about affected students and instructors, tracking dependencies, and consent constraints. An engineer must connect the edit to these artifacts before deciding what to review. Requirements engineering (RE) supplies domain assumptions, stakeholder goals, and specifications~\cite{zave1997,lamsweerde2009}; traceability makes their connections available for change analysis~\cite{gotel1994traceability,cleland2012,goknil2016impact}. The prediction target extends beyond classifying the edited requirement to identifying relevant artifacts and distinguishing their potential consequences.

Natural language processing (NLP) supports classification, concept identification, and trace recovery~\cite{zhao2020nlp}; large language model (LLM) studies address elicitation and classification~\cite{ronanki2023,alhoshan2025}; rule-based analysis propagates changes through explicit relations~\cite{goknil2016impact}. WRM connects several consequence questions to a shared artifact graph. An artifact's role depends on what constrains, implements, or verifies it, as well as its text. Preserving those typed connections makes context available to prediction while keeping each score attached to an artifact an engineer can inspect.

World models motivate a state--action--successor formulation~\cite{ha2018world,hafner2019dream}. In RE, the state records engineering knowledge and the action specifies an artifact edit. The \emph{World Requirement Model} (WRM) instantiates this formulation through language encoding, relation-aware attention, and typed propagation. Shared readouts attach impact, conflict, violation, and defect-risk scores to artifact identifiers; auxiliary objectives predict successor adjacency and a successor representation. Separating contextual artifact features from global decision context is central to both the architecture and the interpretation of its predictions.

We study 282 synthetic cases in six domains and records associated with a 475-case corpus. On the paired 28-case cohort, WRM improves MAP by 16.2\% over the strongest control, a hashed-text MLP, and lowest-quarter mean AP by 32.4\%. A 13.3\% gain under equal domain weights and gains in all four expanded-corpus settings strengthen the descriptive evidence beyond one average. WRM still loses eight paired cases; checkpoints used the scored cases and backbones are unmatched. The study therefore tests contextual scoring and its robustness while auditing the conditions required for broader world prediction.

Four research questions structure the study: \textbf{RQ1}, how WRM compares with available controls for consequence scoring; \textbf{RQ2}, which candidate and transition claims the evaluation supports; \textbf{RQ3}, how comparisons vary with world structure and configuration; and \textbf{RQ4}, whether gains persist across the score distribution and within worlds.

Our contributions are (1) a requirement-world formulation that combines typed artifact representations with shared, artifact-addressed consequence readouts; (2) comparative evidence spanning impact ranking, constraint scoring, expanded-corpus settings, and distributional robustness; and (3) candidate-coverage and lower-tail bounds, together with a scoring-function analysis that makes decision dependence explicit. These contributions connect a concrete prediction architecture, measured scoring advantages, and testable requirements for valid world prediction.

\section{Problem and Motivation}\label{sec:problem}
\subsection{Requirement context and candidate decisions}
A requirement world is a partial record of the engineering context supplied to the model. We represent it as a typed directed multigraph
\begin{equation}
 W_t=(V_t,E_t,\tau_V,\tau_E,X_t),
\end{equation}
where $V_t$ contains artifact nodes, $E_t$ contains relations (possibly several types for one ordered node pair), $\tau_V$ and $\tau_E$ assign types, and $X_t$ associates textual descriptions with nodes. The implementation has eight node types: requirement, stakeholder, constraint, assumption, entity, component, test, and glossary. Its ten relation types are dependency, conflict, refinement, composition, trace, effect, authorship, constraint, satisfaction, and verification. This engineering vocabulary is motivated by the distinction between domain knowledge and specifications~\cite{zave1997}. Types structure the supplied context \mbox{without guaranteeing completeness or correctness}.

A candidate decision $d_t=(o_t,\iota_t,p_t)$ contains an operation, a target identifier, and a payload. The schema supports addition, modification, deletion, linking, unlinking, splitting, merging, prioritization, and elicitation. Schema coverage does not imply empirical coverage: almost all decisions in the examined corpus are additions (Section~\ref{sec:data}). A payload can introduce a node that does not yet occur in $V_t$. This detail is consequential because a newly introduced requirement may itself participate in the conflicts to be predicted.

The target consequence object is
\begin{equation}
 C_{t+1}=(\mathcal C,\mathcal I,\mathcal V,\mathcal G,\mathcal D),
\end{equation}
with conflict pairs $\mathcal C$, ranked impacted artifacts $\mathcal I$, violated constraints $\mathcal V$, requirement gaps $\mathcal G$, and downstream defect-risk annotations $\mathcal D$. In the synthetic data, these objects express generated analyst judgments. They do not document failures observed after software deployment. We use ``defect risk'' to retain that distinction.

The broader modeling objective is to approximate
\begin{equation}
 \widehat T_\theta(W_t,d_t)=(\widehat W_{t+1},\widehat C_{t+1}).\label{eq:objective}
\end{equation}
Equation~\ref{eq:objective} states the world-model objective. The evaluated decoder scores consequences and adds relations over supplied nodes; Section~\ref{sec:method} defines its output limits and decision dependence.

\subsection{A motivating case from the corpus}
Figure~\ref{fig:worldedit} follows examination-system case \texttt{dd91978a6d-0}. Adding live-video requirement \texttt{r4} links affected students and instructors to gaze-tracking and consent obligations. The generated consequence labels flag the consent constraint as a potential violation and the tracking component as a defect-risk target. These are synthetic judgments, not observed failures or a verified privacy assessment. The example shows why a common world representation should identify relevant artifacts while distinguishing consequence types; it does not demonstrate model accuracy.

\begin{figure}[t]
\centering\includegraphics[width=\linewidth]{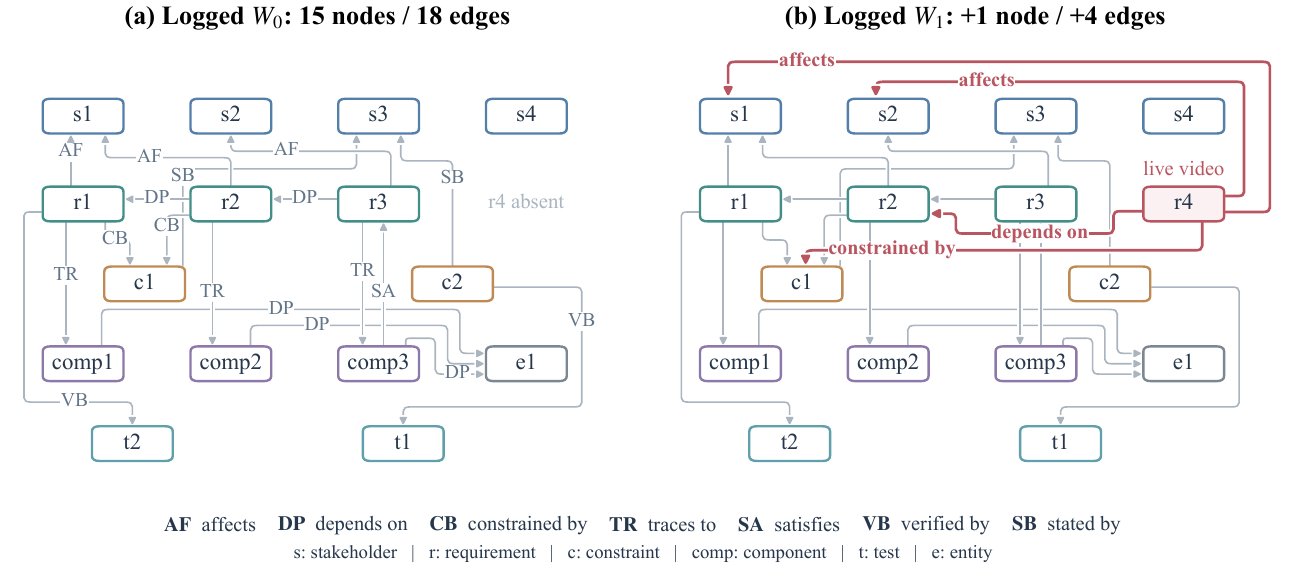}
\caption{Typed relations in a recorded synthetic world edit. For example, $u\xrightarrow{\mathrm{DP}}v$ means ``$u$ depends on $v$''.  Labels in (a) identify the 18 gray relations retained in both panels. In (b), new live-video requirement r4 affects student s1 and instructor s2, depends on gaze-tracking requirement r2, and is constrained by consent c1. The reverse links r3$\to$comp3 (TR) and comp3$\to$r3 (SA) express traceability and satisfaction, respectively. Crossings are not junctions; bends and lengths serve only layout. $W_1$ records a synthetic supervision target rather than an output predicted by WRM.}
\label{fig:worldedit}
\Description{The left graph labels all eighteen retained directed relations with seven decoded type abbreviations. The right graph retains them at identical positions and names four red additions: r4 affects s1 and s2, depends on r2, and is constrained by c1. Reverse r3/component links denote tracing and satisfaction, not feedback flow.}
\end{figure}

Existing-artifact ranking uses current nodes $V_t$, whereas addition conflicts need the new node as a candidate. A decoder can prioritize existing artifacts while failing to express new conflicts. Evaluation must distinguish these output universes.

\begin{table}[t]
\caption{Interfaces of evaluated implementations. ``Shift'' refers to the node-independent logit offset induced by a decision (Section~\ref{sec:advantage}).}
\label{tab:interfaces}
\centering\small
\begin{tabularx}{\linewidth}{l>{\raggedright\arraybackslash}X>{\raggedright\arraybackslash}X>{\raggedright\arraybackslash}X}
\toprule
System & Artifact representation & Decision access & Consequence interface\\
\midrule
Hashed-text MLP & Independent node text & None & Node scores; Conflict endpoint rule\\
Type/degree LR & Node type and degree & None & Separate node classifiers\\
World-text 1-NN & Pooled world text & None & Transfer source annotations\\
TF-IDF & Lexical node features & Text query & Similarity ranks; pair rule\\
API prompt & Serialized nodes/edges & Supplied payload & Generated consequence JSON\\
WRM & Learned text and typed graph & Shift for node scores & Shared graph; learned pair/node heads\\
\bottomrule
\end{tabularx}
\end{table}

Table~\ref{tab:interfaces} contrasts the interfaces. WRM and prompted methods both receive relational context; WRM represents it through shared node features and native scoring heads. The comparison concerns how each instantiated system uses that context.

\subsection{An evaluation contract}
To interpret a reported score, we check the state--decision pair, the head's output universe, and label provenance through three conditions.

\textbf{Candidate closure.} Every positive must be representable or counted as outside the decoder's scope. Additions may require materializing candidate nodes before conflict scoring. Removing out-of-universe positives without reporting coverage changes the evaluated task.

\textbf{Action-label alignment.} Annotations must concern the recorded decision $d_t$. If annotators propose different edits, their consequences are different conditional targets. Aggregating them does not establish agreement about one state--decision pair.

\textbf{Selection separation.} Fitting, checkpoint selection, threshold tuning, and final assessment must use separate information. Disjoint trajectory identifiers do not make a selection partition an untouched test set~\cite{cawley2010,kapoor2023}. This distinction bounds the interpretation of reported intervals.

\begin{proposition}[Candidate-set ceiling]\label{prop:ceiling}
Let $U$ be the artifact identifiers a decoder can emit, $G_I$ a nonempty set of relevant impact identifiers, and $G_C$ a nonempty set of relevant unordered conflict pairs. For a duplicate-free list of at most $k$ predictions, average precision normalized by $|G_I|$ is at most $\min(k,|G_I\cap U|)/|G_I|$. Conflict recall is at most $|G_C\cap\binom{U}{2}|/|G_C|$.
\end{proposition}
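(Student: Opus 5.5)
Write the ranked list as $L=(a_1,\dots,a_m)$ with $m\le k$, distinct entries, and every $a_i\in U$. Truncated average precision normalized by $|G_I|$ is
\[
\AP@k=\frac{1}{|G_I|}\sum_{i=1}^{m}\mathrm{P}@i\cdot\mathbf 1[a_i\in G_I],\qquad \mathrm{P}@i=\frac{|\{a_1,\dots,a_i\}\cap G_I|}{i}.
\]
The plan is a counting argument. I would bound each summand by one and then count how many summands can be nonzero. Each precision term satisfies $\mathrm{P}@i\le 1$. So $\AP@k\le |\{i: a_i\in G_I\}|/|G_I|$, which is the number of relevant hits in the list divided by $|G_I|$.

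Next I bound the number of hits in two ways. Every hit is an emitted identifier, so it lies in $G_I\cap U$. Because $L$ has no duplicates, distinct positions give distinct elements of $G_I\cap U$, and the number of hits is at most $|G_I\cap U|$. The number of hits is also at most $m\le k$. Taking the minimum gives $\min(k,|G_I\cap U|)/|G_I|$.

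For conflict recall, an emitted conflict prediction is an unordered pair of distinct emittable identifiers, that is, an element of $\binom{U}{2}$. The predicted set $P$ of pairs is therefore a subset of $\binom{U}{2}$. Recall is
\[
\frac{|P\cap G_C|}{|G_C|}\le\frac{\bigl|\binom{U}{2}\cap G_C\bigr|}{|G_C|}.
\]
Nonemptiness of $G_I$ and $G_C$ only makes the denominators well defined.

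The main obstacle is not the arithmetic but fixing conventions so the claim is meaningful. I would make three conventions explicit:
\begin{itemize}
\item AP is normalized by $|G_I|$, not by the number of retrieved relevant items. Under the latter normalization, the ceiling fails.
\item Predictions outside $U$ are impossible by definition of $U$.
\item Pairs are unordered and loop-free, so that $\binom{U}{2}$ is the right universe.
\end{itemize}
I would also note that both bounds are attained. For AP, place all of $G_I\cap U$ at the top positions (up to $k$ of them); every precision term is then $1$. For recall, emit all of $G_C\cap\binom{U}{2}$. These are ceilings of the decoder's output universe, not artifacts of a loose estimate, which is the point the evaluation contract needs.
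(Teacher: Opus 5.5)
Your argument is correct and matches the paper's proof: each relevant hit contributes at most one to the AP numerator, duplicate-freeness and the universe $U$ cap the hits at $\min(k,|G_I\cap U|)$, and emitted conflict pairs must lie in $\binom{U}{2}$. Your attainment remarks and your note that the bound fails under retrieved-relevant normalization go beyond what the paper states, but they are consistent with it.
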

\begin{proof}
Each retrieved relevant artifact contributes at most one to the AP numerator, and at most $\min(k,|G_I\cap U|)$ such artifacts can be emitted. A conflict decoder can recover only pairs with both endpoints in $U$. Dividing these maximal hit counts by the respective positive-set sizes gives the bounds.
\end{proof}

\section{The World Requirement Model}\label{sec:method}
\begin{figure}[t]
\centering
\includegraphics[width=\linewidth]{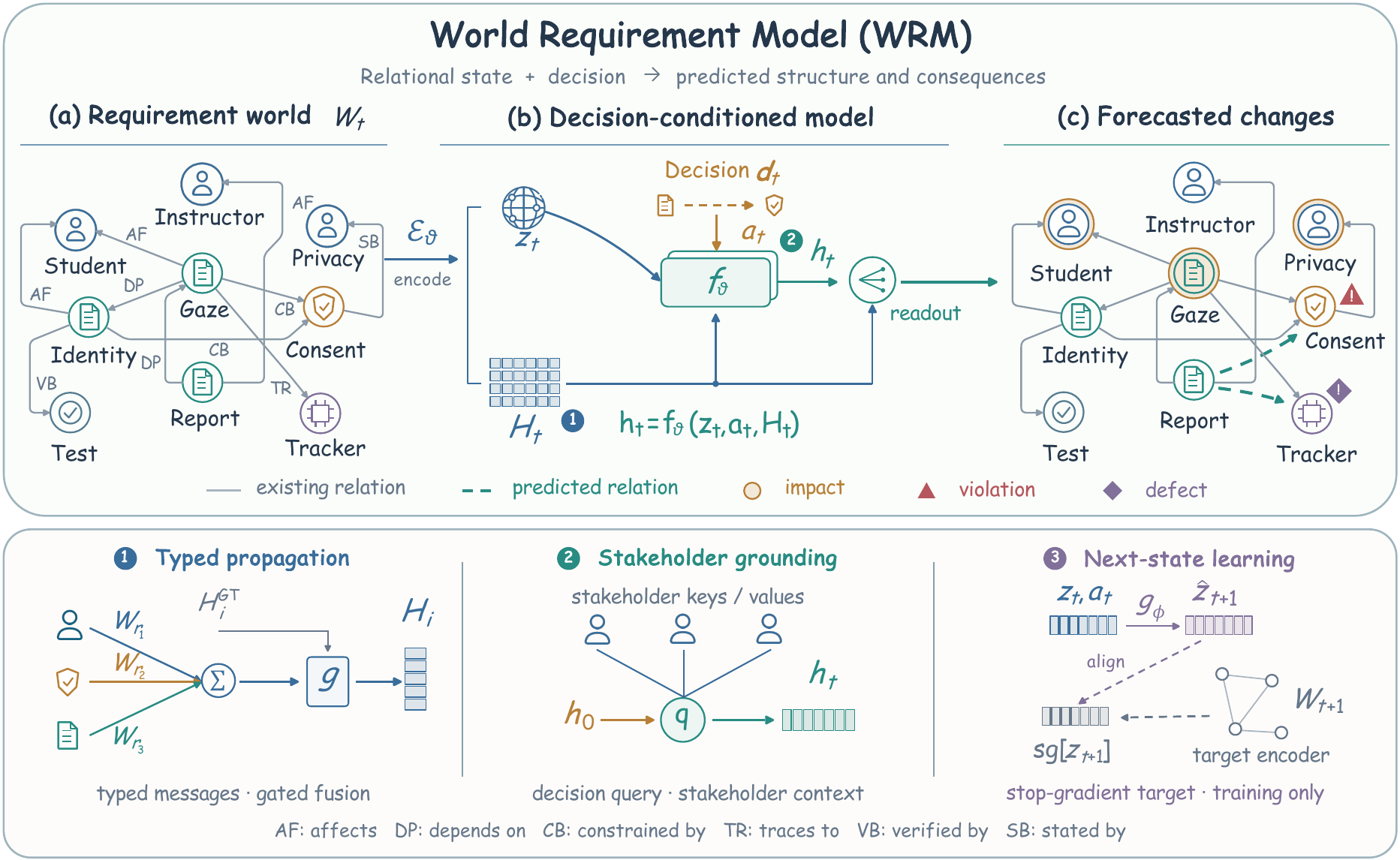}
\caption{World Requirement Model. Panels (a,c) retain nine artifacts and ten typed relations from Figure~\ref{fig:worldedit}; Privacy denotes Privacy Officer. Gray arrows use the bottom key; crossings are not junctions. Insets 1--3 detail propagation, stakeholder grounding, and training-only latent alignment. Colored additions in (c) are schematic: ``Forecasted changes'' and ``predicted relation'' name output types, not measured results.}
\label{fig:overview}
\Description{Nine-node requirement graphs flank world encoding, decision conditioning and shared readout. Gray arrows show ten recorded typed relations; green arrows and consequence marks illustrate output vocabulary. Three lower insets depict typed affine propagation and gated fusion, stakeholder attention, and latent prediction aligned to a stop-gradient successor encoding.}
\end{figure}
Figure~\ref{fig:overview} uses $\mathcal E_\theta(W_t)=(z_t,H_t)$ for world encoding and $h_t=f_\theta(z_t,a_t,H_t)$ for dynamics followed by stakeholder grounding. Node identities attach readouts to supplied artifacts; decoding limits the targets and edits that can be emitted. Solid model arrows denote forward computation; dashed alignment and target-encoding paths belong only to training. The readout symbol groups heads with different inputs. In particular, the conflict head omits $h_t$, so its pairwise scores require separate interpretation from decision-conditioned node scores. Inset 3 uses a supplied successor $W_{t+1}$ as supervision, not another forecast generated by $f_\theta$.

\subsection{Text and graph representation}
Qwen2.5~\cite{qwen2024} with low-rank adaptation (LoRA)~\cite{hu2021lora} encodes node text. Masked mean pooling, projection, and a type embedding yield
\begin{equation}
 h_i^{\mathrm{in}}=P\,e_\mathrm{text}(x_i)+e_\mathrm{type}(\tau_V(i)).
\end{equation}
Following graph attention and structural-bias modeling~\cite{velickovic2018gat,ying2021graphormer}, we prepend a learned world token and bias attention by relation type. With query/key projections $q_i,k_j$ and head width $d_h$:
\begin{equation}
 \alpha_{ij}\propto\exp\left(\frac{q_i^\top k_j}{\sqrt{d_h}}+\beta_{ij}\right).
\end{equation}
For each head, $\beta_{ij}$ sums learned biases for all relation types from $i$ to $j$, with a no-edge bias otherwise; the head index is suppressed. The world token yields $z_t$; node features remain addressable. Attributes enter only through text or types. Losses mask padding, but attention leaves padded logits unmasked, potentially affecting graph-size comparisons.

\subsection{Typed propagation and decision conditioning}
The consequence-propagation layer (CPL) aggregates graph-attention features along typed relations. With row-normalized adjacency $A^{(r)}$ for relation $r$ and node-feature rows in $H^{(0)}=H^{\mathrm{GT}}$, two rounds reuse row-wise affine maps $\Phi_r$ and $\Phi_{\mathrm{self}}$:
\begin{align}
 M^{(\ell)}&=\Phi_{\mathrm{self}}(H^{(\ell)})+\sum_r\Phi_r(A^{(r)}H^{(\ell)}),\\
 H^{(\ell+1)}&=\operatorname{GELU}(M^{(\ell)}).
\end{align}
Each $\Phi_r$ has weights and bias; inset 1 labels its weight matrix $W_r$ and the fusion gate $g$, distinct from latent predictor $g_\phi$. The illustrated $r_1,r_2,r_3$ index relation channels; self-channel, bias, GELU, and repeated-round details are omitted. Gated fusion combines propagated and graph-attention features before dropout. Decision embedding $a_t$ combines the operation with existing target text, otherwise payload text or the operation name; modifications may therefore omit the new payload. An MLP fuses $z_t,a_t$; residual dynamics yield $h_t^0$. Inset 2 abbreviates this as $h_0$ and labels its attention query $q$. Stakeholder-grounded consequence heads (SGCH) attend to stakeholder features $H_S$:
\begin{equation}
 u_t=\operatorname{MHA}(h_t^0,H_S,H_S),\qquad
 h_t=h_t^0+\tanh F([h_t^0,u_t]).
\end{equation}
Stakeholder attention uses four heads, falling back to all valid nodes when no stakeholders exist. Attention supplies context, not validated responsibility attributions. The world vector $z_t$ precedes CPL, while $H_t$ comprises post-CPL node rows $H_i$: the former supplies dynamics, and the latter preserves artifact identities for readouts.

\subsection{Consequence heads and decoding}
Impact, violation, and defect heads score $[H_i,h_t]$ for each node. The conflict head scores pairs using $[H_i,H_j,H_i-H_j,H_i\odot H_j]$, without direct decision input. These distinct inputs matter for the action-dependence analysis in Section~\ref{sec:advantage}. The decoder emits violation scores only for constraints and defect scores only for requirements, components, and tests (Table~\ref{tab:configuration}). Omitted defect scores become zero, so the reported AUC evaluates emitted predictions, including thresholding and type filters. The gap head predicts a text embedding; inference exposes only a generic signal when its norm exceeds 1.0, rather than a semantically identified missing requirement.

\subsection{Training and auxiliary transition objectives}
Requirement-graph contrastive pretraining pairs a text-masked view with an edge-dropped view of each world, using other worlds in the batch as negatives. It encourages agreement between these views of the same context before consequence supervision. Fine-tuning shares the encoder across pair/node classification, successor-adjacency reconstruction, latent prediction, listwise impact supervision, and gap-embedding alignment. Table~\ref{tab:configuration} gives the recorded settings. The saved backbone tag names Qwen2.5-7B, while local loading resolves an instruction-tuned snapshot; exact snapshot identity remains necessary for complete neural reproduction.

\begin{table}[t]
\caption{Recorded configuration and decoding contract. These are implementation settings, not hyperparameter sweeps or measured runtime advantages.}
\label{tab:configuration}
\centering\small
\begin{tabularx}{\linewidth}{lXlX}
\toprule
Component & Setting & Component & Setting\\
\midrule
Node text & 96 tokens; mean pooling & Graph width & 256 dimensions\\
Graph attention & 3 layers; 8 heads & Typed propagation & 2 rounds; 10 relations\\
Graph capacity & 48 padded nodes & LoRA & Rank 16; scale 32\\
Pretraining & 2 contrastive epochs & Fine-tuning & 10 epochs; batch 4\\
Fine-tune opt. & AdamW; LR $2\times10^{-4}$ & Gradient clip & 1.0\\
Impact output & Top 10 current nodes & Pair/violation & Threshold 0.5\\
Refinement & 3 supervised training steps & Headline path & Base forward inference\\
\bottomrule
\end{tabularx}
\end{table}

Training includes three supervised refinement steps. The headline drivers use the base forward wrapper, while a separate driver restores refinement.

A nonlinear edge decoder combines node-pair features with the decision context, while a separate latent predictor uses the world and decision embeddings:
\begin{equation}
 \ell_{ij}^{\mathrm{edge}}=D_\omega([H_i,H_j,h_t]),\qquad
 \widehat z_{t+1}=g_\phi([z_t,a_t]).
\end{equation}
Edge reconstruction targets successor adjacency; latent alignment instead predicts a successor representation, following the image-based joint-embedding predictive architecture (I-JEPA)~\cite{assran2023ijepa}. WRM uses successor graphs without image masking or a momentum teacher. With $\bar z=z/\|z\|_2$ and a stop-gradient successor target, \mbox{the implemented alignment objective is}
\begin{equation}
 L_z=2-2\langle\overline{\widehat z}_{t+1},\operatorname{sg}[\bar z_{t+1}]\rangle.\label{eq:latent_alignment}
\end{equation} The same world encoder supplies the stop-gradient target. Inset 3 shows latent vectors before the normalization in Equation~\ref{eq:latent_alignment}. Graph decoding selects the highest-scoring relation type per current-node pair and adds it at score 0.5 or above. Input nodes and edges remain intact, permitting relation additions but not node/text edits or edge deletion. Supervision additionally requires usable successor graphs aligned with the recorded decision.

\textbf{Within-world refinement.} Iterative consequence reasoning (ICR) encodes $W_t$ once and keeps $H_t,z_t,a_t$ fixed. A five-dimensional summary contains the mean and variance of impact logits and the means of conflict, violation, and defect logits. After the initial prediction, the preceding summary $s^{(k-1)}$ enters a gated update $h_t^{(k)}=R(h_t,s^{(k-1)})$ together with the base context. Heads are supervised at every step. The three steps refine one decision. Rollout instead requires decoding and re-encoding successors under successive decisions; P evaluates only the former.

\subsection{Representational advantages and scoring semantics}\label{sec:advantage}
\textbf{Relational context.} WRM distinguishes artifacts through both content and engineering relationships. Identical descriptions can receive different representations when verification links, constraints, or neighboring artifacts differ. Relation-specific aggregation preserves those distinctions~\cite{battaglia2018relational,schlichtkrull2017}, whereas independent text scores and degree counts omit them. This is a representational advantage: expressive capacity depends on the aggregation scheme~\cite{xu2019gnn}, and matched ablations must establish its contribution to accuracy.

\textbf{Shared supervision and artifact addressing.} Related objectives update a common encoder~\cite{caruana1997}, so one artifact can inform several consequence questions without losing its identity. A constraint can receive both an impact priority and a violation score, linking what to review with the obligation at risk. Direct scoring of current artifacts also avoids the cross-world identifier matching required by label transfer, which the implemented retrieval control does not verify. These are architectural properties; runtime and analyst-effort benefits remain unmeasured. The heads preserve distinct semantics: high impact need not imply violation or defect risk. Cross-head checks must therefore respect output universes and budgets, since a flagged artifact need not occur in every truncated list.

\textbf{Ranking-aligned supervision.} The impact listwise objective treats a world as a candidate list~\cite{cao2007}. For a nonempty set $G^+$ of positive current nodes and logits $s_i$, the loss is
\begin{equation}
 L_{\mathrm{rank}}=-\frac{1}{|G^+|}\sum_{i\in G^+}
 \log\frac{\exp(s_i)}{\sum_{j\in V}\exp(s_j)}.\label{eq:listwise}
\end{equation}
Worlds without positive current targets contribute zero to this loss. Equation~\ref{eq:listwise} supplies list-level supervision, not a differentiable MAP objective; a matched ablation must establish its contribution.

\textbf{Scope of action conditioning.} The current linear impact head factorizes as
\begin{equation}
 s_i(W,d)=\langle u,H_i(W)\rangle+\langle v,h(W,d)\rangle+b.\label{eq:factorization}
\end{equation}
At inference with fixed parameters and $W$, $s_i-s_j=\langle u,H_i-H_j\rangle$: the decision term cancels from ordering and the listwise softmax. Decisions can shift sigmoid scores and thresholded outputs, but cannot rerank these features. The impact comparison therefore tests world-context scoring. Action-specific reranking needs node--decision interactions and evaluation on fixed worlds with different edits; the nonlinear edge decoder requires its own transition assessment.

\section{Data and Study Design}\label{sec:data}
\subsection{Corpus construction and provenance}
WRM-Bench uses an LLM-based protocol to generate typed worlds, edits, and consequences from domain scenarios. The recorded endpoint label is \texttt{glm-5.2}. Generation emphasizes high-impact decisions and stakeholder tensions, providing inspectable graph inputs rather than an estimate of industrial consequence prevalence.

Relabeling uses majority votes for conflicts, violations, and defects, retains ten frequent impact targets, and deduplicates gaps. Its prompt omits the stored decision and requests a new high-impact edit, then attaches the generated consequences to the original decision. Raw repeated decisions are unavailable, preventing estimation of the mismatch rate. The labels therefore have uncertain action alignment: interpretable self-consistency requires fixed state--decision inputs~\cite{wang2022consistency}.

\begin{table}[t]
\caption{Primary corpus statistics from the frozen \texttt{v7\_rich} snapshot. Partition names follow the data file; the training driver combines Train and Validation for fitting and uses Nominal test for checkpoint selection.}
\label{tab:data}
\centering
\begin{tabular}{lrrrr}
\toprule
Characteristic & Train & Validation & Nominal test & Total\\
\midrule
Decision cases & 227 & 27 & 28 & 282\\
Mean input nodes & 15.37 & 15.44 & 15.04 & 15.34\\
Mean input edges & 17.33 & 16.89 & 17.29 & 17.28\\
Empty successor graphs & 222 & 27 & 27 & 276\\
Conflict annotations & 432 & 62 & 56 & 550\\
Conflicts within input nodes & 5 & 2 & 0 & 7\\
Impact annotations outside input & 197 & 22 & 27 & 246\\
\bottomrule
\end{tabular}
\end{table}

The snapshot contains 282 cases from 277 trajectories: 272 have one step and five have two. Of the decisions, 279 are additions, with one each of elicitation, linking, and modification. Domain counts are automotive 82, e-commerce 51, fintech 49, health 48, aerospace 42, and education 10. Worlds contain 15.34 nodes and 17.28 edges on average (Table~\ref{tab:data}). These observed trajectories define the available temporal evidence, irrespective of the generator's requested maximum length.

\begin{figure}[t]
\centering\includegraphics[width=\linewidth]{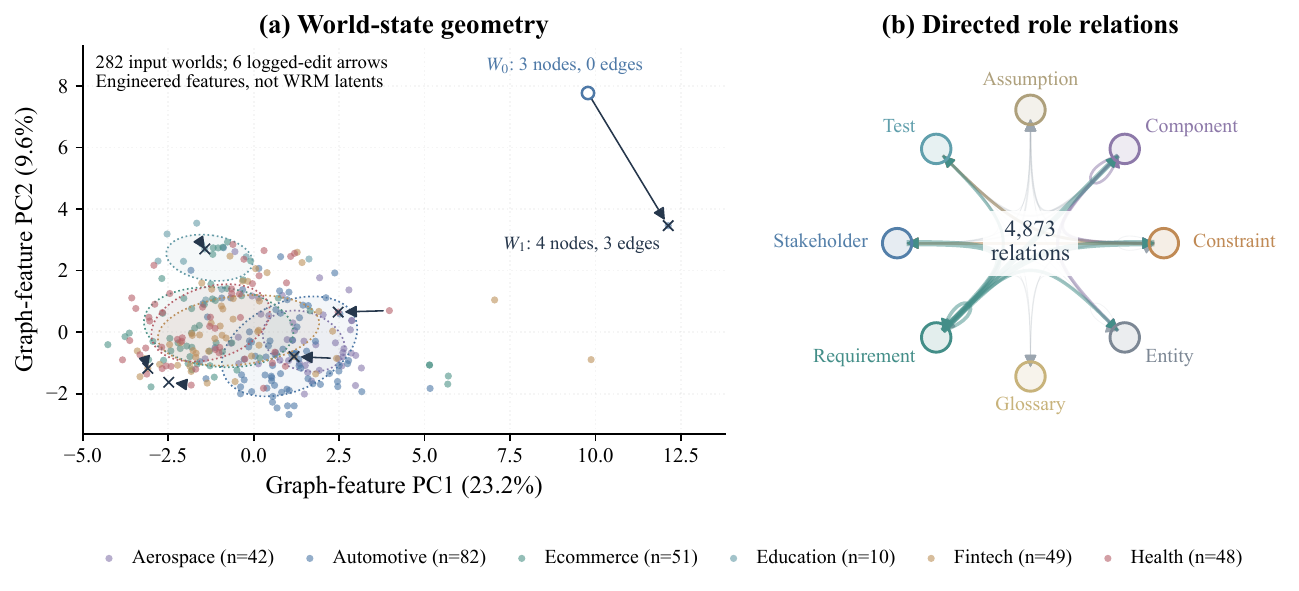}
\caption{Requirement-world structure. (a) Graph-feature PCA of 282 input worlds with six logged-successor arrows; ellipses describe covariance, not confidence. Coordinates are engineered descriptors, not WRM latents. (b) Directed chords aggregate 4,873 relations among eight roles; widths use square-root scaling, with the 12 largest role pairs emphasized.}
\label{fig:corpus}
\Description{A graph-feature state-space projection displays domains and logged transitions; a directed chord network shows aggregate relations among artifact roles.}
\end{figure}

Figure~\ref{fig:corpus} characterizes the supplied worlds. PCA uses 21 descriptors: node/edge-type proportions, log counts, and density. It is fitted on input worlds, then projects the six nonempty successors without refitting. The distant automotive case \texttt{44b3018bf7} starts with three stakeholders and no edges, then adds a requirement and three relations; those unusual descriptors explain its position. The chord diagram summarizes connectivity across artifact roles. These views describe structural diversity and recorded edits, not learned latent dynamics or prediction error.

\subsection{Evaluation records and selection protocol}
We distinguish \textbf{C}, the ten-system comprehensive comparison; \textbf{F}, a seven-system API-focused comparison; and \textbf{E}, aligned case-level scores for paired analysis. C/F share 28 scored cases, but identically named methods can differ across records, so scores are never pooled. The v8 driver fits on 254 non-test cases, including 27 validation-labelled cases, and selects its checkpoint on the 28 test-labelled cases. The comparisons reuse selection cases.

\textbf{Expanded-corpus settings.} Four annotation/training settings each score 47 of 475 cases; 428 enter fitting and 47 select checkpoints. \emph{Original labels} and \emph{Revised labels} use CPL and stakeholder grounding without graph contrastive pretraining. \emph{Retrieval training} uses retrieved case context and learned task weights; \emph{Edge modulation} trains with decision-dependent relation biases. These names describe training conditions: the common evaluation loader restores only the base scorer, omitting separately saved auxiliary modules. They therefore do not isolate component effects. Aggregate summaries lack per-case prediction identifiers and remain separate from C/F/E.

\subsection{Comparators and comparison scope}
The C/F records contain 11 distinct control configurations: TF-IDF, BM25, SBERT, world-text 1-NN, hashed-text MLP, type/degree logistic regression (LR), GLM-5.2 prompting, GLM-5.3 zero-shot and CoT-labelled prompting, ICL-labelled prompting, and a keyword heuristic. Table~\ref{tab:allcomparators} includes every recorded configuration or identifies its identical repeat. Closed-model names are the endpoint labels used in the artifacts; the comparison concerns those instantiated pipelines.

Implementation identity is essential to interpreting this comparison. The PointwiseLoRA-labelled control is a hashed-text encoder with an MLP, without Qwen or LoRA. Graph-struct uses node types and in/out degrees with LR. World-text 1-NN and SBERT transfer retrieved-world labels without a general artifact-correspondence procedure; SBERT uses all-MiniLM-L6-v2~\cite{reimers2019}. GLM-5.2 receives a short example and a reasoning instruction~\cite{wei2022cot}. The GLM-5.3 zero-shot driver omits the decision payload and truncates nodes/edges; its CoT-labelled driver additionally omits node text, edges, and fitted demonstrations. ICL examples are unfitted, and the heuristic has an execution-error path. These failures remain visible in Table~\ref{tab:allcomparators}, but do not establish inferior method families. Learned text/structure controls use scikit-learn~\cite{pedregosa2011}. Decoding also differs: WRM emits up to ten current nodes above $10^{-3}$, while learned text/structure controls threshold at 0.5.

\begin{table}[t]
\caption{Direct consequence comparisons with complete C/F coverage. C: comprehensive run; F: API run. Higher is better. Bold marks WRM, underlining the strongest C control. AUC uses 27 cases; other metrics use 28. $^{a}$Unfitted/error path. $^{b}$Restricted prompt. Selected fits; unmatched backbones.}
\label{tab:allcomparators}\label{tab:taskadv}
\centering\small\begingroup\setlength{\tabcolsep}{3.2pt}
\begin{tabular}{llrrrrr}\toprule
 & & \multicolumn{2}{c}{Impact} & Violation & \multicolumn{2}{c}{Defect}\\
Run & Instantiated method & MAP & nDCG@10 & F1 & AUC & F1\\\midrule
C & TF-IDF & 0.283 & 0.455 & 0.367 & 0.500 & 0.486\\
C & BM25 transfer & 0.324 & 0.422 & 0.404 & 0.541 & 0.543\\
C & SBERT world 1-NN & 0.519 & 0.678 & 0.559 & 0.736 & 0.710\\
C & GLM-5.2 CoT-labelled & 0.377 & 0.505 & 0.389 & 0.592 & 0.592\\
C & ICL-labelled$^{a}$ & 0.008 & 0.017 & 0.405 & 0.500 & 0.486\\
C & World-text 1-NN (RAG) & 0.551 & 0.706 & 0.586 & 0.759 & 0.735\\
C & Hashed-text MLP & \underline{0.626} & \underline{0.747} & \underline{0.617} & \underline{0.823} & \underline{0.819}\\
C & Type/degree LR & 0.558 & 0.698 & 0.414 & 0.762 & 0.748\\
C & Keyword heuristic$^{a}$ & 0.000 & 0.000 & 0.367 & 0.500 & 0.486\\
C & \textbf{WRM (base)} & \textbf{0.724} & \textbf{0.831} & \textbf{0.693} & \textbf{0.833} & \textbf{0.820}\\
\midrule
\multicolumn{2}{l}{Absolute gain vs. best C control} & \textbf{+0.098} & \textbf{+0.084} & \textbf{+0.076} & \textbf{+0.010} & \textbf{+0.001}\\
\multicolumn{2}{l}{Relative gain vs. best C control} & \textbf{+15.6\%} & \textbf{+11.3\%} & \textbf{+12.2\%} & \textbf{+1.3\%} & \textbf{+0.1\%}\\
\midrule
F & GLM-5.2 CoT-labelled & 0.397 & 0.538 & 0.496 & 0.571 & 0.581\\
F & GLM-5.3 zero-shot$^{b}$ & 0.279 & 0.383 & 0.433 & 0.741 & 0.646\\
F & GLM-5.3 CoT-labelled$^{b}$ & 0.206 & 0.343 & 0.485 & 0.472 & 0.504\\
\bottomrule\end{tabular}\endgroup
\par\smallskip\footnotesize
F repeats SBERT, BM25, RAG and WRM at C scores. Gains use unrounded means. E is a separate per-case export (Table~\ref{tab:paired}); its all-zero GLM-5.3 CoT record is excluded from capability claims.
\end{table}

CodeBERT appears only in external traceability records~\cite{feng2020codebert}. SetFit, Llama pointwise, GLM-5.2 zero-shot, CoT self-consistency, and graph-only GNN are planned or comment-only entries without completed headline results. The implemented decision-agnostic class is not invoked by the evaluation driver. Earlier developmental ablations use a different model/corpus setting and are not matched ablations of this 7B checkpoint. The artifact inventory preserves these distinctions and the task-specific comparisons.

\subsection{Structural strata and inference configuration}
RQ3 partitions the 28 E cases by two input properties: artifact count $|V|$ and directed relation density $\rho=|E|/[|V|(|V|-1)]$. Each split uses the input median, assigning ties to the lower group: 15 artifacts and density 0.08452. Neither threshold uses a model score. Within each group we compare WRM, hashed text, type/degree LR, and world-text transfer. These descriptive partitions reuse fixed predictions and overlap across the two structural properties.

A separate record, \textbf{P} (\texttt{eval\_v8.json}), compares an ICR-enabled WRM with a GLM-5.2 CoT-labelled predictor on 28 recorded cases. Its driver restores refinement and reads the final-step outputs. Its defect decoder accepts every node type, whereas C/F restrict defect output types. P thus evaluates a distinct inference pipeline. Without saved per-case identifiers and an otherwise identical ICR-off run, it cannot establish the isolated effect of refinement.

\begin{figure}[t]
\centering\includegraphics[width=\linewidth]{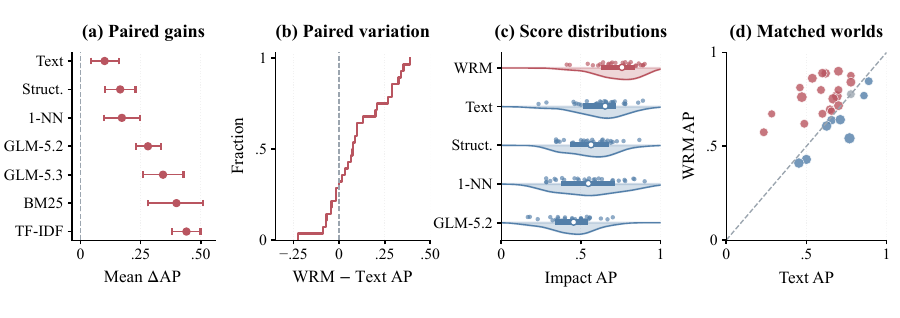}
\caption{Impact-ranking evidence in E ($n=28$). (a) WRM-minus-control means and 95\% paired case-bootstrap intervals (10,000 draws). (b) Empirical distribution of WRM-minus-text AP. (c) Rainclouds retain all cases, median and IQR. (d) Paired scores; point area follows node count, with red/blue/gray marking 19 wins, eight losses and one tie. Text: hashed-text MLP; Struct.: type/degree LR; 1-NN: world-text transfer. GLM labels identify recorded endpoints. Selected fits and unmatched controls limit inference.}
\label{fig:historic}\label{fig:distributions}
\Description{Four aligned panels show paired-effect intervals for seven controls, a cumulative difference distribution, raincloud distributions and matched-world scores.}
\end{figure}

\subsection{Metrics and uncertainty}
Neither ranking metric directly measures analyst effort. Conflict and violation scores average case-wise binary macro-F1, assigning 1.0 when both gold and predictions are empty. Defect AUC (area under the receiver operating characteristic curve) uses 27 cases containing both classes; defect F1 thresholds emitted scores at 0.5 across 28 cases. Percentile intervals use 10,000 case-bootstrap draws with seed 42, paired for differences~\cite{efron1992bootstrap}. They condition on the selected fits and omit model-selection, training-seed, annotation, and scenario uncertainty. Dataset-level comparisons require a different independent unit~\cite{demsar2006}. An AI assistant (Codex) supported analysis and plotting scripts; weighted-sum identities and exact deletion bounds verify the new sensitivity calculations.

For a nonempty gold impact set $G_s$ and a unique ranked list $R_s$, average precision is
\begin{equation}
 \AP_s=\frac{1}{|G_s|}\sum_{j=1}^{|R_s|}\mathbf1[R_{s,j}\in G_s]\,
 \frac{|\{R_{s,1},\ldots,R_{s,j}\}\cap G_s|}{j}.
\end{equation}
Recorded MAP averages $\AP_s$ over cases and uses all annotated target identifiers in $G_s$, including targets outside the current graph. All listed targets count as relevant; annotation weights do not enter AP or nDCG. The eligible-node conversion instead uses $G_s\cap V_s$. It describes the same predictions under a different denominator, not an improvement from a newly trained model.

For binary relevance, normalized discounted cumulative gain at ten (nDCG@10) is:
\begin{equation}
 \mathrm{nDCG@10}(s)=\frac{\sum_{j=1}^{\min(10,|R_s|)}\mathbf1[R_{s,j}\in G_s]/\log_2(j+1)}{\sum_{j=1}^{\min(10,|G_s|)}1/\log_2(j+1)}.
\end{equation}
\section{Results}\label{sec:results}
\subsection{RQ1: Single-step consequence prediction}

\textbf{Impact ranking and constraint scoring.} WRM has the highest recorded mean on all five C metrics (Table~\ref{tab:allcomparators}). Against the strongest control, impact MAP improves by 15.6\%, nDCG@10 by 11.3\%, and violation F1 by 12.2\%. The two ranking measures locate the gain both across relevant positions and near the start of the review list; the violation result adds evidence on identifying constraint risks. Together, they support a useful combination of artifact prioritization and consequence discrimination. Defect gains are smaller (1.3\% AUC; 0.1\% F1), so prioritization and constraint scoring provide the clearest evidence of benefit.

The paired export E permits a closer examination of impact ranking. WRM has MAP 0.724 versus 0.623 for hashed text, with paired mean difference 0.101 and conditional 95\% interval $[0.044,0.159]$. Figure~\ref{fig:historic}(a,b) shows the intervals and the empirical difference distribution. These are E-specific comparisons: hashed text scores 0.626 in C, and GLM-5.2 prompting scores 0.443 in E versus 0.377 in C. The separate records retain those differences rather than treating repeated method names as repeated measurements of a fixed predictor.

\begin{table}[t]
\caption{Direct WRM--control impact comparison in E ($n=28$). Relative gain is $(\MAP_{WRM}-\MAP_b)/\MAP_b$; W/T/L counts higher/equal/lower WRM scores. Bold marks WRM scores and gains. Case-bootstrap intervals condition on selected fits and do not correct checkpoint selection.}
\label{tab:paired}
\centering\small
\begingroup\setlength{\tabcolsep}{3.8pt}
\begin{tabular}{lrrrrrl}
\toprule
Control & $\MAP_b$ & $\MAP_{WRM}$ & $\Delta$ MAP & Rel. gain & W/T/L & 95\% interval\\
\midrule
Hashed-text MLP & 0.623 & \textbf{0.724} & \textbf{+0.101} & \textbf{16.2\%} & 19/1/8 & [0.044,0.159]\\
Type/degree LR & 0.558 & \textbf{0.724} & \textbf{+0.165} & \textbf{29.6\%} & 21/0/7 & [0.102,0.229]\\
World-text 1-NN & 0.551 & \textbf{0.724} & \textbf{+0.172} & \textbf{31.3\%} & 23/0/5 & [0.099,0.248]\\
GLM-5.2 prompt & 0.443 & \textbf{0.724} & \textbf{+0.281} & \textbf{63.4\%} & 28/0/0 & [0.230,0.333]\\
GLM-5.3 zero-shot & 0.380 & \textbf{0.724} & \textbf{+0.343} & \textbf{90.3\%} & 25/0/3 & [0.260,0.429]\\
BM25 transfer & 0.324 & \textbf{0.724} & \textbf{+0.399} & \textbf{123.1\%} & 24/1/3 & [0.279,0.511]\\
TF-IDF & 0.283 & \textbf{0.724} & \textbf{+0.441} & \textbf{155.9\%} & 28/0/0 & [0.379,0.499]\\
\bottomrule\end{tabular}\endgroup
\end{table}

Table~\ref{tab:paired} shows that the ranking gain spans all seven available E controls. WRM improves MAP by 16.2\% over hashed text, 29.6\% over type/degree LR, and 31.3\% over world-text transfer, covering comparisons with lexical, structural, and retrieved-label predictors. It exceeds GLM-5.2 prompting and TF-IDF on all 28 cases, and hashed text on 19, with eight losses and one tie (Figure~\ref{fig:distributions}(d)). This breadth supports an advantage for the recorded WRM scorer while locating individual counterexamples. Backbone and decoding differences prevent isolating graph structure as the cause; RQ4 tests the stability of the observed ordering.

\subsection{RQ2: Candidate, action, and transition validity}
\begin{figure}[t]
\centering\includegraphics[width=\linewidth]{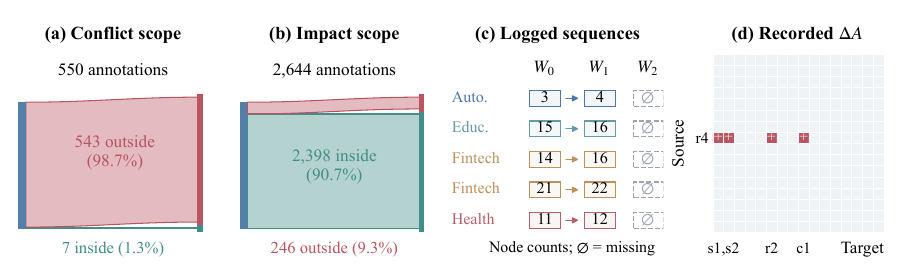}
\caption{Candidate coverage and recorded transition support. (a) Only 7/550 conflicts have both endpoints in the input graph. (b) Current nodes cover 2,398/2,644 impact annotations. Ribbon widths are proportional within each head. (c) Node counts in five two-step trajectories; $\varnothing$ denotes missing successor labels. The other 272 trajectories have one step. (d) The 16-node education edit has four added adjacency entries and no removals; only changed endpoints are labelled. These are corpus records, not WRM rollout predictions.}
\label{fig:coverage}\label{fig:integrity}
\Description{Four panels show conflict and impact annotation flows, five recorded state sequences and the sparse education-case edit matrix.}
\end{figure}

\textbf{Candidate coverage and metric validity.} The mean attainable AP ceiling is 0.89426. Replacing the gold denominator with current-node positives re-expresses WRM's recorded score as eligible AP@10 of 0.81243: each case is multiplied by $|G_s|/|G_s\cap V_s|$. This exact conversion isolates denominator coverage while retaining the same predictions. Conflict scoring has a more severe mismatch. All 56 nominal-test conflicts include an excluded endpoint, so projection onto current-node pairs removes every positive. Empty predictions then receive 1.0 in all 28 cases. Tensorization, decoding, and evaluation therefore exclude newly introduced conflicts (Figure~\ref{fig:coverage}(a,b)).

\textbf{Transition evidence.} Of 282 successor graphs, 276 are empty, including 27/28 nominal-test cases. Only five trajectories have two steps, and none supports horizons three or four (Figure~\ref{fig:integrity}(c,d)). In addition, the relabeling prompt does not freeze the stored decision, leaving action alignment unresolved. These constraints prevent a reliable assessment of successive predicted edits. Node-count preservation alone adds little evidence: the decoder preserves current nodes by construction.

\textbf{Auxiliary capability checks.} Archived deletion, gap, and cascade checks test non-increasing counts, positive-only gap activation, and nonempty output. Stakeholder hits (22/25) lack a precision measure; decision-ranking correlation is $-0.016$ on five cases. Type decodability does not establish causal world knowledge without appropriate controls and interventions~\cite{hewitt2019,li2022world}. An earlier developmental ablation also has higher no-graph MAP (0.610) than full-model MAP (0.589). These records are retained as limits on capability and mechanism claims, rather than counted as additional evidence for the headline checkpoint.

\subsection{RQ3: World structure and inference configuration}
\textbf{Comparisons across world structures.} WRM leads all three controls in mean AP in each of the four input-defined strata (Table~\ref{tab:structure}). For worlds with at most 15 nodes, WRM scores 0.752 versus 0.615 for hashed text; in larger worlds, 0.686 versus 0.634. Positive differences of 0.137 and 0.053 show that the observed advantage spans both size groups, with a larger margin in smaller worlds. Candidate-set ceilings (0.897 versus 0.891) and mean impact-set sizes (9.25 versus 9.33) are similar. Candidate coverage alone therefore does not explain the difference in margins.

\begin{table}[t]
\caption{Direct impact comparison across input-defined strata of E. Text: hashed-text MLP; Struct.: type/degree LR; 1-NN: world-text transfer. Intervals are paired case-bootstrap intervals for WRM minus Text. The two partitions overlap; this is exploratory selected-case analysis.}
\label{tab:structure}\centering\small
\begingroup\setlength{\tabcolsep}{3pt}
\begin{tabular}{lrrrrrrl}\toprule
World stratum & $n$ & WRM & Text & Struct. & 1-NN & $\Delta$ Text & 95\% interval\\\midrule
$|V|\leq15$ & 16 & \textbf{0.752} & 0.615 & 0.513 & 0.554 & +0.137 & [0.066,0.210]\\
$|V|>15$ & 12 & \textbf{0.686} & 0.634 & 0.619 & 0.548 & +0.053 & [-0.032,0.140]\\
$\rho\leq0.08452$ & 14 & \textbf{0.699} & 0.630 & 0.580 & 0.512 & +0.069 & [-0.009,0.147]\\
$\rho>0.08452$ & 14 & \textbf{0.748} & 0.616 & 0.537 & 0.590 & +0.132 & [0.054,0.214]\\
\bottomrule\end{tabular}\endgroup\end{table}

Density exhibits a different pattern: the WRM-minus-text difference is 0.132 in the denser half and 0.069 in the sparser half. This is compatible with relational context becoming useful when more links are available, but node count, domain, label prevalence, and relation types vary jointly. Padding behavior also complicates graph-size interpretation (Section~\ref{sec:method}). The intervals for larger and sparser worlds include zero. Thus these strata locate variation in the observed gain; matched graph interventions are needed to attribute it to particular structural properties.

\textbf{The ICR-enabled pipeline.} In P (Table~\ref{tab:pilot}), WRM has impact MAP 0.724 versus 0.524, nDCG@10 0.831 versus 0.668, and violation F1 0.732 versus 0.638: relative gains of 38.2\%, 24.4\%, and 14.7\%. Its larger defect-score differences belong to P's all-type decoder. This supports a pipeline-level comparison, not an isolated refinement effect, because no matched ICR-off run is available. The identical WRM MAP in C and P is consistent with the fixed-world ordering property in Section~\ref{sec:advantage}; refinement need not change impact order even when other outputs change.

\begin{table}[t]
\caption{Reported comparison P: ICR-enabled WRM versus its recorded GLM-5.2 comparator. AUC uses 27 cases; other metrics use 28. P is a separate pipeline record with no saved case identifiers. Its all-type defect decoder differs from C/F; rows do not constitute an ICR ablation.}
\label{tab:pilot}\centering\small
\begingroup\setlength{\tabcolsep}{4pt}
\begin{tabular}{lrrrrr}\toprule
 & \multicolumn{2}{c}{Impact} & Violation & \multicolumn{2}{c}{Defect}\\
Recorded system & MAP & nDCG@10 & F1 & AUC & F1\\\midrule
GLM-5.2 CoT-labelled & 0.524 & 0.668 & 0.638 & 0.597 & 0.569\\
\textbf{WRM with ICR} & \textbf{0.724} & \textbf{0.831} & \textbf{0.732} & \textbf{0.860} & \textbf{0.829}\\
\midrule
Absolute difference & +0.200 & +0.163 & +0.094 & +0.263 & +0.260\\
\bottomrule\end{tabular}\endgroup\end{table}

\textbf{Expanded-corpus comparisons.} Table~\ref{tab:expanded} extends the comparison to four 47-case settings. WRM exceeds the strongest completed control, hashed text, in all 20 setting--metric means, with gains of 15.6--29.1\% in MAP and 9.4--21.8\% in nDCG@10. The ranking improvement is therefore accompanied by higher reported violation and defect means in every setting, rather than an aggregate accuracy trade-off between heads. Magnitudes remain task-dependent: violation F1 rises by only 0.004 under retrieval training. The componentwise ordering survives any nonnegative weighting of the five means with positive total weight, but defines neither per-case dominance nor a validated engineering utility. Full control and failure records remain in the artifact.

\begin{table}[t]
\caption{Comparison across annotation and training settings ($n=47$ per setting). Cells show \textbf{WRM} / strongest completed control (hashed-text MLP throughout); TF-IDF and type/degree LR are also evaluated. All settings evaluate the base scorer, reuse selection cases, and are not independent replications or component ablations.}
\label{tab:expanded}\centering\small
\begingroup\setlength{\tabcolsep}{4pt}
\begin{tabular}{lccccc}\toprule
& \multicolumn{2}{c}{Impact} & Violation & \multicolumn{2}{c}{Defect}\\
Setting & MAP & nDCG@10 & F1 & AUC & F1\\\midrule
Original labels & \textbf{0.655} / 0.508 & \textbf{0.788} / 0.646 & \textbf{0.649} / 0.638 & \textbf{0.762} / 0.746 & \textbf{0.771} / 0.746\\
Revised labels & \textbf{0.717} / 0.598 & \textbf{0.828} / 0.734 & \textbf{0.668} / 0.615 & \textbf{0.860} / 0.834 & \textbf{0.839} / 0.824\\
Retrieval training & \textbf{0.696} / 0.596 & \textbf{0.808} / 0.733 & \textbf{0.619} / 0.615 & \textbf{0.857} / 0.832 & \textbf{0.843} / 0.824\\
Edge modulation & \textbf{0.703} / 0.608 & \textbf{0.811} / 0.742 & \textbf{0.707} / 0.615 & \textbf{0.873} / 0.833 & \textbf{0.849} / 0.826\\
\midrule
Rel. gain range & +15.6--29.1\% & +9.4--21.8\% & +0.7--15.0\% & +2.1--4.9\% & +1.8--3.3\%\\
\bottomrule\end{tabular}\endgroup
\end{table}

The settings share input graphs and splits. Revised labels change all 47 test annotations; retrieval training reuses that corpus, while edge modulation changes only three non-test labels. Case IDs and exact input graphs do not overlap the primary corpus, although scenario templates may recur. These comparisons broaden descriptive evidence without providing independent replications or a scaling experiment. Aggregate-only records cannot support reconstructed paired intervals.

\subsection{RQ4: Robustness across requirement worlds}
\textbf{Lower-tail performance.} Following distributional evaluation practice~\cite{agarwal2021precipice}, we examine all 28 E scores for each method. For method $b$, order its scores as $a_{b,(1)}\leq\cdots\leq a_{b,(28)}$ and define
\begin{equation}
L_m(b)=\frac{1}{m}\sum_{j=1}^{m}a_{b,(j)},\qquad m=1,\ldots,28.
\end{equation}
Thus $L_7$ averages the lowest quarter, $L_{14}$ the lowest half, and $L_{28}$ recovers MAP. Each method selects its own low-scoring cases, so these summaries compare distributions rather than one shared difficult subset. Table~\ref{tab:tailrobustness} shows WRM means of 0.546 and 0.619 for the lowest quarter and half, versus 0.413 and 0.508 for hashed text. The differences are 0.134 (32.4\%) and 0.111 (21.9\%); the medians are 0.758 and 0.652. The mean advantage therefore accompanies stronger lower-tail outcomes, not merely a few high scores.

\begin{table}[t]
\caption{Impact-ranking robustness of 28 selected E fits. Macro-MAP gives each of six domains equal weight. $L_7$ and $L_{14}$ average each method's lowest seven and fourteen AP scores; these subsets can differ. Higher AP/MAP and lower rank are better. Bold marks WRM; underlining identifies the strongest control.}
\label{tab:tailrobustness}\centering\small
\begingroup\setlength{\tabcolsep}{5pt}
\begin{tabular}{lrrrrr}\toprule
 & \multicolumn{4}{c}{Impact AP/MAP ($\uparrow$)} & Rank ($\downarrow$)\\
Method & Macro-MAP & $L_7$ & $L_{14}$ & Median & Mean\\\midrule
\textbf{WRM} & \textbf{0.734} & \textbf{0.546} & \textbf{0.619} & \textbf{0.758} & \textbf{1.964}\\
Hashed-text MLP & \underline{0.648} & \underline{0.413} & \underline{0.508} & \underline{0.652} & \underline{2.982}\\
Type/degree LR & 0.564 & 0.370 & 0.443 & 0.563 & 3.714\\
World-text 1-NN & 0.584 & 0.316 & 0.402 & 0.546 & 4.000\\
GLM-5.2 prompt & 0.453 & 0.271 & 0.337 & 0.454 & 5.339\\
GLM-5.3 zero-shot & 0.364 & 0.135 & 0.232 & 0.412 & 5.518\\
BM25 transfer & 0.314 & 0.000 & 0.070 & 0.300 & 5.821\\
TF-IDF & 0.280 & 0.150 & 0.189 & 0.255 & 6.661\\
\bottomrule\end{tabular}\endgroup
\end{table}

\begin{figure}[t]
\centering\includegraphics[width=\linewidth]{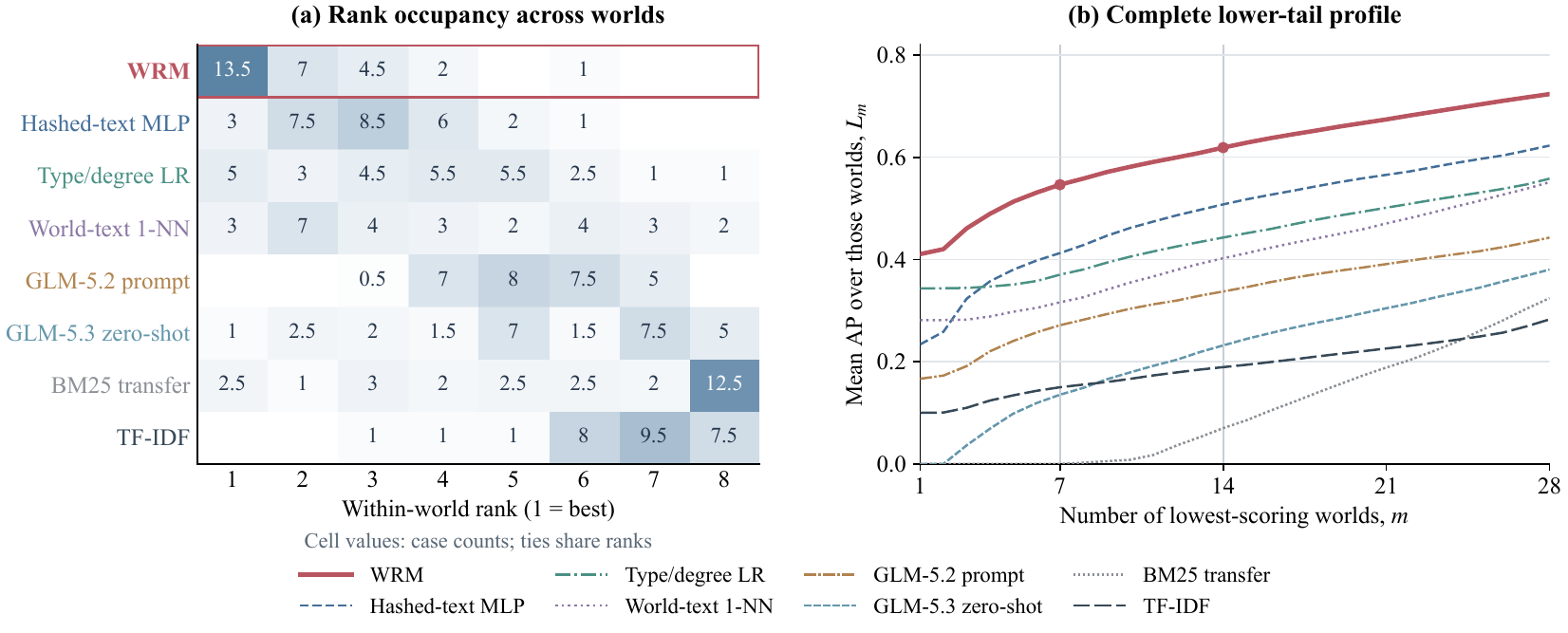}
\caption{Robustness of the recorded impact scores in E. (a) Rank occupancy over all 28 worlds; tied methods split their occupied rank positions. (b) Mean of each method's $m$ lowest AP scores, for every $m=1,\ldots,28$. WRM leads all seven controls at every displayed tail size. Curves reuse selected fits, and method-specific tails need not contain the same worlds.}
\label{fig:tailrobustness}
\Description{A rank-occupancy heatmap and complete lower-tail score curves distinguish case-wise ranking consistency from distributional lower-tail performance.}
\end{figure}

\textbf{Within-world consistency.} Mean rank is 1.96 for WRM versus 2.98 for hashed text; top-three counts are 25/28 versus 18/28, using average ranks for ties. These cohort summaries coexist with eight individual losses and do not establish population dominance. The all-zero GLM-5.3 CoT export is excluded because successful execution is unverified; its raw record remains in the artifact.

\textbf{Composition sensitivity.} Equal domain weights yield MAP 0.734 for WRM versus 0.648 for hashed text, a 13.3\% gain (Table~\ref{tab:tailrobustness}). Removing each domain in turn leaves WRM above all seven controls; the gap to hashed text ranges from 0.079 to 0.116. WRM does not lead in every domain: the strongest controls win in e-commerce and fintech. Sorting the 28 paired gains gives exact deletion bounds: removing any ten worlds leaves a positive mean advantage over hashed text; reversal requires at least 11 deletions. These checks reuse scores and cannot establish domain transfer.

\section{Discussion}\label{sec:discussion}
\subsection{Interpreting the ranking advantage}
The results support complementary aspects of WRM's recorded advantage: higher mean impact and constraint scores, stronger lower-tail outcomes, and stability to domain weighting and case deletion. The expanded-corpus settings add evidence beyond subdivisions of E, subject to their shared data and selection protocol. Proposition~\ref{prop:tailweights} further shows why the complete lower-tail profile matters: its ordering survives every weighting scheme that emphasizes lower-scoring outcomes.

\begin{proposition}[Lower-tail weighting]\label{prop:tailweights}
Let $a_{b,(1)}\leq\cdots\leq a_{b,(n)}$ be method $b$'s sorted AP scores. If $L_m(\mathrm{WRM})>L_m(b)$ for every $m=1,\ldots,n$, then $\sum_j w_j a_{\mathrm{WRM},(j)}>\sum_j w_j a_{b,(j)}$ for any $w_1\geq\cdots\geq w_n\geq0$ with $\sum_jw_j=1$.
\end{proposition}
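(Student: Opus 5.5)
The plan is to use Abel summation, rewriting the weighted sum of sorted scores as a nonnegative combination of the lower-tail means $L_m$.

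Set $S_m(b)=\sum_{j=1}^m a_{b,(j)}=m L_m(b)$, with $S_0=0$, so that $a_{b,(j)}=S_j(b)-S_{j-1}(b)$. With the convention $w_{n+1}=0$, summation by parts gives
\begin{equation*}
\sum_{j=1}^n w_j a_{b,(j)}=\sum_{m=1}^{n}(w_m-w_{m+1})\,S_m(b)=\sum_{m=1}^{n} c_m L_m(b),\qquad c_m=m(w_m-w_{m+1}).
\end{equation*}
The same identity holds for WRM with the same coefficients $c_m$, because the weights are attached to sorted positions and not to particular worlds. This is why the proposition is stated for each method's own sorted list.

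Next, the coefficients are nonnegative. Monotonicity $w_1\ge\cdots\ge w_n\ge 0$ gives $c_m\ge0$ for $m<n$, and $c_n=n w_n\ge0$. They also sum to one, since $\sum_m c_m=\sum_m m(w_m-w_{m+1})=\sum_j w_j=1$. Hence each weighted sum is a convex combination of the tail means. Subtracting the two expansions gives
\begin{equation*}
\sum_j w_j a_{\mathrm{WRM},(j)}-\sum_j w_j a_{b,(j)}=\sum_m c_m\bigl(L_m(\mathrm{WRM})-L_m(b)\bigr).
\end{equation*}
Every term is nonnegative by hypothesis.

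The only delicate point is strictness. Because the $c_m$ are nonnegative and sum to one, at least one satisfies $c_m>0$. For that index $L_m(\mathrm{WRM})-L_m(b)>0$ by hypothesis, so the difference is strictly positive. The equal-weight case $w_j=1/n$ reduces to $c_n=1$, which recovers the MAP comparison $L_n$.
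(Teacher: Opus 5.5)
Your proof is correct and is the paper's argument: the same summation-by-parts identity $\sum_j w_j a_{b,(j)}=\sum_m m(w_m-w_{m+1})L_m(b)$ with $w_{n+1}=0$, nonnegative coefficients summing to one, and positivity of every tail difference. You also make the strictness step explicit (some $c_m>0$ because the coefficients sum to one), which the paper leaves implicit.
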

\begin{proof}
Set $w_{n+1}=0$. Rearranging the finite sum gives
\begin{equation}
\sum_{j=1}^{n}w_j a_{b,(j)}=\sum_{m=1}^{n}m(w_m-w_{m+1})L_m(b).\label{eq:tailweights}
\end{equation}
The coefficients on the right are nonnegative and sum to one. Every WRM-minus-control tail difference is positive, so their weighted sum is positive.
\end{proof}

With $n=28$, Figure~\ref{fig:tailrobustness} satisfies the condition for all seven included E controls. Uniform weights recover MAP; weights restricted to the lowest seven or fourteen scores recover the reported tail means. Against hashed text, the smallest gap over this weight class is 0.10076, attained by uniform weighting over all 28 scores. Thus the ordering holds across every nonincreasing rank-weight schedule on these recorded distributions. The schedules weight each method's own ordered outcomes, not the same named worlds, so this result coexists with eight paired losses and does not imply population dominance.

A plausible mechanism is that typed context helps identify consequences carried by artifact relations beyond textual similarity. Improvements over text, degree-only, and label-transfer controls are compatible with this explanation, but unmatched backbones prevent causal attribution. A decisive test would hold node text, model capacity, and training protocol fixed while changing verification or constraint relations, then independently annotate the consequences. Text-competitive cases belong in that study: they reveal when relational modeling must add information beyond a strong language representation.

For engineering use, WRM's common identifiers could connect a proposed edit, a ranked artifact, and its associated constraint and stakeholder context in one review interface. The examination example illustrates the benefit of keeping these views attached to the same artifacts. Relation paths would supply inspectable context, not validated causal explanations. Measuring accepted suggestions, missed impacts, and review time would establish whether better ranking translates into better change decisions. Such a study should match candidate universes and review budgets, reporting retrieved impacts and missed constraint risks, including excluded targets. Better mean ranking can still leave an important constraint below the cutoff, so consequence-specific coverage must accompany the rank score.

\subsection{A validation path for requirement-world prediction}
Algorithm~\ref{alg:contract} specifies the evaluation conditions needed to move from contextual scoring to requirement-world prediction. An addition needs materialized candidate nodes, a modification needs old and new text, and a deletion needs removal semantics. Independent edit tests separate decoder guarantees from learned accuracy.

\begin{algorithm}[t]
\caption{Evaluation protocol for a requirement-world study}\label{alg:contract}
\begin{algorithmic}[1]
\Require Cases $(W,d,W',C)$ and head-specific universes $U_h$
\State Validate node identifiers, edge endpoints, types, and decision payload
\State Check that repeated annotations use the same $(W,d)$ for each case
\State Construct $U_h$ explicitly, materializing candidate nodes if required
\For{each case and consequence head $h$}
  \State Compute positive-label coverage $|C_h\cap U_h|/|C_h|$ where defined
  \State Retain out-of-universe labels as coverage failures or reject the case
\EndFor
\State Assign project/scenario groups before fitting or label-based tuning
\State Select the model and decoding thresholds using training/validation only
\State Evaluate untouched cases; log invalid outputs separately from abstentions
\State Score full graph edits and aligned consequences at each observed horizon
\State Report candidate coverage, prevalence, denominators, and uncertainty
\end{algorithmic}
\end{algorithm}

The next comparison should hold the Qwen snapshot, adaptation, input budget, candidate universe, and selection rule fixed across text-only and graph variants. Edge permutation and removal of graph attention, typed propagation, stakeholder grounding, or refinement test distinct mechanisms. Labels must concern the exact stored decision, retain disagreements, include varied operations and negative cases, and undergo blinded human review. Transition evaluation additionally needs nonempty successors and sufficiently long trajectories. Copy-state and deterministic-edit controls should anchor per-horizon node, edge, text, and consequence evaluation.

Action sensitivity can be tested by pairing two decisions on the same world with independently annotated differences in impact priority. For artifact pairs whose correct order reverses, compare the margins $s_i(W,d)-s_j(W,d)$ across the two decisions. Equation~\ref{eq:factorization} predicts unchanged margins for the current linear head, including within-world refinement. A new variant therefore needs node--decision interactions, through the representation or the readout. Correct reversals and erroneous reversals on unchanged pairs should both be reported, separating useful sensitivity from instability. This proposed experiment tests a capability that the present MAP comparison cannot establish.

\section{Threats to Validity}\label{sec:threats}
\textbf{Construct validity.} The labels are generated judgments about synthetic requirements, not deployed-software outcomes. Relabeling may attach consequences to a different decision, and missing raw decisions prevent estimating that error. Type decoding, nonempty cascades, and preserved node counts are weaker constructs than causal world knowledge, correct edits, or stakeholder agreement. The head-specific candidate analysis makes these distinctions explicit.

\textbf{Internal validity.} The scored cases selected WRM checkpoints. Backbone, preprocessing, input-budget, and decoding differences prevent component-level attribution. C/F/E, P, and the expanded-corpus records remain separate; P and the four expanded settings lack complete per-case prediction manifests. Structural strata reuse E and are exploratory. These choices preserve the provenance of each comparison but do not remove selection bias or establish a matched ablation.

\textbf{External validity.} The six domains share generation policies, and 279/282 primary decisions are additions. Small graphs, short descriptions, repeated scenario families, and ten education cases limit transfer to industrial repositories, other operations, languages, and project histories. A larger generated corpus does not by itself address that limitation. Archived PROMISE classification also favors generic Qwen over WRM features (weighted F1 0.833 versus 0.758); the complete external comparison remains in the artifact.

\textbf{Statistical validity and reproducibility.} Case-bootstrap intervals omit selection, seed, annotation, and scenario uncertainty. Lower-tail dominance concerns the selected cohort and allows individual-case losses. Hashes and analysis scripts reproduce the reported arithmetic, not historical training or closed-service behavior. Full neural reproduction requires the checkpoint, model snapshot, dependencies, and endpoint access; pretraining contamination remains unassessed.

\section{Related Work}\label{sec:related}
\textbf{Requirements, traceability, and change analysis.} Foundational RE separates domain knowledge, goals, and specifications~\cite{zave1997,lamsweerde2009}; traceability connects them across the artifact lifecycle~\cite{gotel1994traceability,cleland2012}. Event-based maintenance~\cite{cleland2003event} and formal relation semantics~\cite{goknil2011semantics,goknil2016impact} support change analysis. Trace recovery uses information retrieval and semantic learning~\cite{borg2014traceability,guo2017traceability}, motivating our TF-IDF and sentence-embedding controls~\cite{manning2008,reimers2019}. Recent approaches emphasize prompt and demonstration design in TraceLLM~\cite{alturayeif2026tracellm} and LLM-based data augmentation with an enhanced encoder~\cite{zhang2026traceability}. NLP and LLM studies also address classification and elicitation~\cite{zhao2020nlp,ronanki2023,alhoshan2025}. WRM takes these artifact relations as supplied context and learns several consequence scores for a proposed edit.

\textbf{Structured representation and shared prediction.} Transformers, graph attention, and structural biases combine textual and relational information~\cite{vaswani2017attention,velickovic2018gat,ying2021graphormer}. Relational inductive biases and relation-specific graph learning preserve edge meaning~\cite{battaglia2018relational,schlichtkrull2017}; multitask learning and listwise objectives support shared supervision and ranking~\cite{caruana1997,cao2007}. WRM combines these ingredients around persistent artifact identities, connecting several consequence scores to one engineering state. The contribution is their requirement-world formulation and evaluated scoring behavior; matched interventions remain necessary to attribute gains to individual components.

\textbf{World models and predictive representations.} World Models, Dreamer, and TD-MPC2 learn predictive representations for control~\cite{ha2018world,hafner2019dream,hafner2025dreamer,hansen2024tdmpc2}. I-JEPA predicts in representation space~\cite{assran2023ijepa}; V-JEPA~2 extends video representation learning with an action-conditioned model for robotic planning~\cite{assran2025vjepa2}. WRM applies a state--edit--successor formulation to engineering artifacts, with its strongest evidence in consequence scoring. Successor validity includes artifact identity, textual revisions, and changed relations. Latent agreement cannot establish a correct edit: an addition requires a new artifact. Separate graph and consequence outputs make these properties testable. Emergent-world studies use decoding and interventions~\cite{li2022world}; WRM receives types explicitly, so decodability alone cannot establish learned world knowledge~\cite{hewitt2019}.

\textbf{Evaluation scope.} Formal RE research proposes combining LLM support with formal checks on generated artifacts~\cite{ferrari2025formal}; cross-task RE evaluation finds that model performance depends on the activity~\cite{dabrowski2026crosstask}. WRM therefore needs distinct evidence for scoring, edit validity, and action sensitivity. Distributional evaluation complements means~\cite{agarwal2021precipice}, while selection bias, leakage, contamination, and dependent observations require separate remedies~\cite{cawley2010,kapoor2023,dong2024,demsar2006}. Our candidate and readout analyses clarify whether targets are representable and scores can depend on the decision.

\section{Conclusion}\label{sec:conclusion}
World Requirement Model couples typed artifact context with shared consequence readouts, making multiple review questions addressable within one requirement world. On the paired cohort, WRM improves impact MAP by 16.2\% and lowest-quarter mean AP by 32.4\%; equal domain weights retain a 13.3\% gain, and any ten-case deletion preserves the mean advantage over hashed text. Four expanded-corpus settings show MAP gains of 15.6--29.1\%. These results from selected checkpoints support stronger contextual prioritization. Shared artifact identities make predicted priorities and risks inspectable in their engineering context. Broader world prediction still requires candidate closure, aligned decisions, independent assessment, and action-sensitive readouts. Industrial studies should test whether these ranking gains reduce review effort.

\section*{Data Availability}
We have assembled a replication package containing the frozen primary corpus and expanded-corpus snapshots, archived comparator results, per-case impact AP scores, analysis scripts, retained configuration and driver snapshots, figure assets, and provenance audits. File hashes and mappings between published setting names and source records support traceability. The scripts recompute the reported aggregates, paired score comparisons, candidate coverage, domain-balanced MAP, and case-deletion bounds, and regenerate derived tables and statistical figures from the retained records. These materials support verification of the reported analyses without rerunning neural training. Reproducing model inference additionally requires the original checkpoint, exact Qwen snapshot, compatible software environment, and access to any closed-model services used by the relevant comparator. Aggregate-only records lack complete per-case prediction manifests and cannot support reconstructed paired intervals. The package is currently maintained locally. Upon acceptance, we will publicly release the data and open-source the code in the replication package. This version provides no repository access link.

\bibliographystyle{ACM-Reference-Format}
\bibliography{wrm_references_new}
\end{document}